\newcommand*{\ditto}{--\texttt{"}--}
\definecolor{darkbrown}{rgb}{.5,.2,0}
\def\doi[#1]{\href{https://doi.org/#1}{\texttt{DOI:\,#1}}}
\def\url[#1]{\href{#1}{\texttt{URL:\,#1}}}
\setlist[enumerate]{%
% vertical
topsep=.5ex plus0.5ex minus0.2ex,  %additional space above the list
parsep=.5ex plus0.3ex minus0.1ex,  %space between paragraphs of an item: parsep
itemsep=.5ex plus0.5ex minus0.2ex, %space between items: itemsep+parsep
% horizontal
leftmargin=4ex,    %space to the left of the text in second, third, etc. lines
itemindent=0ex,    %indentation of first line against second line
labelwidth=2ex,    %width of label box, * is standard width depending on font
labelsep=1ex,      %space between label box and text in first line
listparindent=0ex, %indentation of first line of paragraph within an item
rightmargin=0ex,   %space to the right of the text
align=left,        %label alignment, default is right
%labelindent=1ex,  %space to the left of the label box (specific to the enumitem package)
%
font=\rmfamily,
label=\arabic*)}
\theoremstyle{plain}
\newtheorem{Cor}{Corollary}
\newtheorem{Lem}{Lemma}
\newtheorem{Thm}{Theorem}
\theoremstyle{definition}
\newtheorem{Def}{Definition}
\newtheorem{Que}{Question}
\newtheorem{Rem}{Remark}
\newtheorem*{Cor*}{Corollary}
\newcommand{\C}{\mathbb{C}}
\newcommand{\N}{\mathbb{N}}
\newcommand{\R}{\mathbb{R}}
\newcommand{\cB}{\mathcal{B}}
\newcommand{\cH}{\mathcal{H}}
\newcommand{\cK}{\mathcal{K}}
\newcommand{\cP}{\mathcal{P}}
\newcommand{\fS}{\mathfrak{S}}
\newcommand{\id}{\mathds{1}}
\newcommand{\pro}{\mathrm{prod}}
\newcommand{\pure}{\mathrm{pure}}
\newcommand{\sep}{\mathrm{sep}}
\newcommand{\sym}{\mathrm{sym}}
\DeclareMathOperator\conv{conv}
\DeclareMathOperator\ext{ext}
\DeclareMathOperator\im{im}
\DeclareMathOperator\sgn{sgn}
\DeclareMathOperator\Tr{Tr}
\begin{document}
\title[Symmetric separable states and ground state energy]
{Decomposition of symmetric separable states and ground state energy of bosonic systems}
\author{Stephan Weis}
\begin{abstract}
We prove that every symmetric separable state admits a convex decomposition into 
symmetric pure product states. While the result is not new in itself, here we focus 
on convex geometry. 
We discuss the decomposition in the context of numerical ranges and ground state 
problems of infinite bosonic systems.
\end{abstract}
%
%
%\date{September 9th, 2020}
\subjclass[2010]{15A69, 15A60, 47L07, 81V70, 81P40}
\keywords{quantum state, symmetric state, antisymmetric state, separable state, 
entanglement, numerical range, ground state energy, bosonic system}
%
% MSC numbers: INCLUDED
%
% -- 15-XX Linear and multilinear algebra; matrix theory
%
% 15A69 Multilinear algebra, tensor calculus
% 
% 15A60 Norms of matrices, numerical range, 
% applications of functional analysis to matrix theory [See also 65F35, 65J05]
%
% -- 47-XX Operator theory
%
% 47L07 Convex sets and cones of operators
%
% -- 81-XX Quantum Theory
%
% 81V70 Many-body theory; quantum Hall effect 
%
% 81P40 Quantum coherence, entanglement, quantum correlations
%
% MSC numbers: EXCLUDED
%
% -- 52-XX Convex and discrete geometry
%
% 52A21 Convexity and finite-dimensional Banach spaces
%
% 52Axx General convexity
%
%
\maketitle
%
%%%%%%%%%%%%%%%%%%%%%%%%%%%%%%%%%%%%%%%%%%%%%%%%%%%%%%%%%%%%%%%%%%%%%%%%%%%%
%%%%%%%%%%%%%%%%%%%%%%%%%%%%%%%%%%%%%%%%%%%%%%%%%%%%%%%%%%%%%%%%%%%%%%%%%%%%
%%%%%%%%%%%%%%%%%%%%%%%%%%%%%%%%%%%%%%%%%%%%%%%%%%%%%%%%%%%%%%%%%%%%%%%%%%%%
%%%%%%%%%%%%%%%%%%%%%%%%%%%%%%%%%%%%%%%%%%%%%%%%%%%%%%%%%%%%%%%%%%%%%%%%%%%%
%%%%%%%%%%%%%%%%%%%%%%%%%%%%%%%%%%%%%%%%%%%%%%%%%%%%%%%%%%%%%%%%%%%%%%%%%%%%
%
\section{Introduction}
Bosons are quantum mechanical particles described by density operators 
acting on the symmetric tensor product of a one-particle Hilbert space 
\cite{Messiah1999}. A classic result by St{\o}rmer \cite{Stormer1969} shows  
the $k$-particle marginal of an infinite bosonic system is a convex combination 
of symmetric pure product states. This result is known as the 
{\em quantum de Finetti theorem} in statistical mechanics 
\cite{Lewin-etal2014,Zeng-etal2019} and in quantum information theory 
\cite{Christandl-etal2009,LancienWinter2017}. Besides the de Finetti theorem,
the interrelation between symmetry and correlation is a more general topic in 
quantum information theory 
\cite{Christandl-etal2012,Augusiak-etal2012,QianChu2019}. 
\par
The mentioned symmetric pure product states are separable states, that is to say, 
they are convex combinations of product states. We prove the converse 
statement: every symmetric separable state admits a convex decomposition into 
symmetric pure product states. We are aware that the assertion is well known,
see \cite{Korbicz-etal2005,TothGuehne2009,Bohnet-Waldraff-etal2016} and 
\cite[Lemma 7.29]{Kraus2003}. Our proof stresses the convex geometry of the problem, 
namely that the set of symmetric states is a face of the set of all quantum states. 
The claim then follows because all one-particle marginals of a symmetric state are 
identical. Upon a sign change, our proof shows every antisymmetric state is entangled.
\par
The motivation behind this work is to examine linear images $\Pi_\sym$ of the set of 
symmetric pure product states. As per the quantum de Finetti theorem, the ground 
state energy of a local energy operator of an infinite bosonic system is the distance 
of the origin from a supporting hyperplane to $\Pi_\sym$. Studying the convex hull 
$\conv(\Pi_\sym)$ is therefore pivotal to understanding ground state phenomena of 
bosonic systems \cite{Zauner-etal2016,Chen-etal2017a,Chen-etal2017b}.
\par
Quantum information theory uses numerical ranges to represent linear images of 
subsets of quantum states 
\cite{Schulte-Herbrueggen-etal2008,Gawron-etal2010,Chen-etal2017b,Li-etal2020}. 
We show the set $\conv(\Pi_\sym)$ is a linear image of the set of symmetric separable 
states, a face of the set of separable states. This would enable studies of 
$\conv(\Pi_\sym)$ by taking advantage of methods used with the set of separable states 
\cite{AlfsenShultz2010,Kye2013,BengtssonZyczkowski2017,AubrunSzarek2017}.
\par
In Section~\ref{sec:dec-subspace} we recall that the set of quantum states supported 
on a subspace is a face of the set of all quantum states. 
Section~\ref{sec:dec-symm-sep} contains the main proofs and a question regarding the 
number of terms in the decomposition of a symmetric separable state. In Section~\ref{sec:jnr} we discuss numerical ranges, 
which we connect to the ground state energy of an infinite bosonic system in 
Section~\ref{sec:bosons}.
\par
%
%%%%%%%%%%%%%%%%%%%%%%%%%%%%%%%%%%%%%%%%%%%%%%%%%%%%%%%%%%%%%%%%%%%%%%%%%%%%
%%%%%%%%%%%%%%%%%%%%%%%%%%%%%%%%%%%%%%%%%%%%%%%%%%%%%%%%%%%%%%%%%%%%%%%%%%%%
%%%%%%%%%%%%%%%%%%%%%%%%%%%%%%%%%%%%%%%%%%%%%%%%%%%%%%%%%%%%%%%%%%%%%%%%%%%%
%%%%%%%%%%%%%%%%%%%%%%%%%%%%%%%%%%%%%%%%%%%%%%%%%%%%%%%%%%%%%%%%%%%%%%%%%%%%
%%%%%%%%%%%%%%%%%%%%%%%%%%%%%%%%%%%%%%%%%%%%%%%%%%%%%%%%%%%%%%%%%%%%%%%%%%%%
%
\section{Decomposition of states supported on subspaces}
\label{sec:dec-subspace}
We recall aspects of the convex geometry of the set of quantum states and point
out that every state that contributes to a convex decomposition of a state 
supported on a subspace is itself supported on the same subspace. 
\par
Let $\cH$ be a finite-dimensional Hilbert space, $\cB(\cH)$ the Banach algebra of linear 
operators on $\cH$, and $\fS(\cH)$ the compact, convex set of {\em density operators} 
on $\cH$, that is to say, positive operators with trace one, also known as {\em states} 
or {\em quantum states}. If $\cH=\C^n$ for some $n\in\N$, we identify $\cB(\cH)$ with 
the set $M_n$ of $n$-by-$n$ matrices, and $\fS(\cH)$ with the compact, convex set of 
positive semi-definite matrices of trace one. 
\par
A {\em projector} on $\cH$ is an idempotent, self-adjoint operator $P\in\cB(\cH)$; 
this means that $P=P^2=P^\ast$. The set of projectors on $\cH$ is partially ordered 
by the relation $P\preceq Q$, which signifies that $Q-P$ is a positive operator. It 
is well known that the mapping that sends a projector $P$ on $\cH$ to the image 
$\im(P)=\{P(\varphi)\mid\varphi\in\cH\}$ is a lattice isomorphism from the set of 
projectors on $\cH$ to the set of subspaces of $\cH$ partially ordered by inclusion. 
\par
A subset $F$ of a convex set $C$ is a \emph{face} of $C$ if $F$ is convex and whenever 
$x\in F$ and $x=(1-\lambda)y+\lambda z$ for some $\lambda\in(0,1)$ and $y,z\in C$, then 
$y$ and $z$ are also in $F$. A point $x\in C$ is called an {\em extreme point} of $C$ 
if $\{x\}$ is a face of $C$. Let $\cK\subset\cH$ be a subspace. We say an operator 
$A\in\cB(\cH)$ is \emph{supported} on $\cK$ if the images 
$\im(A)=\{A(\varphi):\varphi\in\cH\}$ of $A$ and 
$\im(A^\ast)=\{A^\ast(\varphi):\varphi\in\cH\}$ of the adjoint $A^\ast$ are included 
in $\cK$. Let 
\begin{equation}\label{eq:supp}
\fS(\cH,\cK)
\end{equation}
denote the set of states on $\cH$ that are supported on $\cK$. It is well known that 
the mapping that sends a subspace $\cK$ of $\cH$ to the set $\fS(\cH,\cK)$ is a 
lattice isomorphism from the set of subspaces of $\cH$ to the set of faces of 
$\fS(\cH)$ partially ordered by inclusion \cite{AlfsenShultz2001}. All faces of the 
set of quantum states $\fS(\cH)$ are compact, convex sets.
\par
In particular, the set of extreme points of the set of quantum states $\fS(\cH)$, 
also called {\em pure states} on $\cH$, is the set of projectors of rank one, 
\begin{equation}\label{eq:pure}
\ext(\fS(\cH))
=\{\ket{\varphi}\!\!\bra{\varphi} \colon \varphi\in\cH, \braket{\varphi|\varphi}=1\}.
\end{equation}
In quantum information theory, the projector onto the span of a unit vector 
$\varphi\in\C^n$ is denoted by $\ket{\varphi}\!\!\bra{\varphi}$; both $\varphi$ and 
$\ket{\varphi}\!\!\bra{\varphi}$ are referred to as {\em pure states}. 
\par
Recall that the \emph{relative interior} of a convex subset $C$ of a Euclidean vector 
space is the interior of $C$ with respect to the topology of the affine hull of $C$.
\par
\begin{Lem}\label{lem:support}
Let $\cK\subset\cH$ be a subspace and let $\rho\in\fS(\cH,\cK)$ be a state supported 
on $\cK$. Let $\lambda_1,\dots,\lambda_d>0$ be strictly positive real numbers that 
sum up to one and let $\rho=\sum_{i=1}^d \lambda_i \rho_i$
be a convex combination of states $\rho_1,\dots,\rho_d\in\fS(\cH)$. Then each state 
$\rho_i$ is supported on $\cK$. If $\rho_i=\ket{\varphi_i}\!\!\bra{\varphi_i}$ is 
the pure state corresponding to a unit vector $\varphi_i\in\cH$, then $\varphi_i$ lies 
in $\cK$ for $i=1,\dots,d$.
\end{Lem}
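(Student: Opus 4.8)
The plan is to work with the orthogonal projector onto $\cK$ and its complementary projector, exploiting the positivity of the states $\rho_i$ together with the elementary fact that a sum of nonnegative reals vanishes only if every summand vanishes. First I would observe that, since each $\rho_i$ is self-adjoint, the two conditions $\im(\rho_i)\subseteq\cK$ and $\im(\rho_i^\ast)\subseteq\cK$ that define support on $\cK$ coincide, so it suffices to prove $\im(\rho_i)\subseteq\cK$. Using $\im(\rho_i)=\ker(\rho_i)^\perp$ for the self-adjoint operator $\rho_i$, this is in turn equivalent to $\cK^\perp\subseteq\ker(\rho_i)$, that is, to showing $\rho_i\psi=0$ for every $\psi\in\cK^\perp$.

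The central step would be to fix a unit vector $\psi\in\cK^\perp$ and use that $\rho$ is supported on $\cK$, which gives $\braket{\psi|\rho|\psi}=0$. Expanding the convex combination yields
\begin{equation*}
0=\braket{\psi|\rho|\psi}=\sum_{i=1}^d\lambda_i\braket{\psi|\rho_i|\psi},
\end{equation*}
where each summand $\lambda_i\braket{\psi|\rho_i|\psi}$ is nonnegative because $\lambda_i>0$ and $\rho_i\succeq0$. A sum of nonnegative numbers is zero only if every term is zero, so $\braket{\psi|\rho_i|\psi}=0$ for all $i$.

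It then remains to pass from $\braket{\psi|\rho_i|\psi}=0$ to $\rho_i\psi=0$, and this is the one step carrying genuine content; it rests on positivity. Writing $\rho_i=\rho_i^{1/2}\rho_i^{1/2}$, I get $0=\braket{\psi|\rho_i|\psi}=\|\rho_i^{1/2}\psi\|^2$, hence $\rho_i^{1/2}\psi=0$ and therefore $\rho_i\psi=\rho_i^{1/2}(\rho_i^{1/2}\psi)=0$. Since $\psi\in\cK^\perp$ was arbitrary, this gives $\cK^\perp\subseteq\ker(\rho_i)$, equivalently $\im(\rho_i)\subseteq\cK$, so each $\rho_i$ is supported on $\cK$.

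For the final assertion, if $\rho_i=\ket{\varphi_i}\!\!\bra{\varphi_i}$ for a unit vector $\varphi_i$, then $\im(\rho_i)=\lin\{\varphi_i\}$, and the inclusion $\im(\rho_i)\subseteq\cK$ just established forces $\varphi_i\in\cK$. I do not anticipate a serious obstacle here; the only point that needs care is the positivity argument linking the vanishing of the quadratic form to the vanishing of $\rho_i$ on $\cK^\perp$. As an alternative I could instead invoke the already-recalled fact that $\fS(\cH,\cK)$ is a face of $\fS(\cH)$: grouping the combination as $\rho=\lambda_1\rho_1+(1-\lambda_1)\sigma$ with $\sigma=\sum_{i\ge2}\tfrac{\lambda_i}{1-\lambda_1}\rho_i$ and inducting on $d$ would deliver the claim directly from the definition of a face. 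I prefer the positivity argument above, however, since it is self-contained and simultaneously settles the pure-state case.
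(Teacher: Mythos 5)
Your proof is correct, but it takes a genuinely different route from the paper. The paper's proof is deliberately convex-geometric: it invokes the cited fact that $\fS(\cH,\cK)$ is a face of $\fS(\cH)$ (via the lattice isomorphism from \cite{AlfsenShultz2001}), notes by Theorem~6.9 of \cite{Rockafellar1970} that $\rho$ lies in the relative interior of $\conv(\{\rho_1,\dots,\rho_d\})$, and then applies Theorem~18.1 of \cite{Rockafellar1970} to conclude that the whole convex hull, hence each $\rho_i$, sits inside the face $\fS(\cH,\cK)$ --- no induction and no spectral argument. Your argument instead works directly with positivity: from $\braket{\psi|\rho|\psi}=0$ for $\psi\in\cK^\perp$ you extract $\braket{\psi|\rho_i|\psi}=0$ term by term, and the step $\braket{\psi|\rho_i|\psi}=\|\rho_i^{1/2}\psi\|^2=0\Rightarrow\rho_i\psi=0$ closes the gap between the vanishing of the quadratic form and the vanishing of the operator on $\cK^\perp$; the reduction of support to $\cK^\perp\subseteq\ker(\rho_i)$ via self-adjointness is also handled correctly. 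What each approach buys: yours is elementary and self-contained --- it never uses that $\fS(\cH,\cK)$ is a face, and in effect it \emph{proves} that fact for arbitrary finite convex combinations --- whereas the paper's proof is shorter precisely because it outsources the work to the face property and to Rockafellar's theorems, which fits the paper's stated emphasis on the convex geometry of the problem. Your closing alternative (grouping into two terms and inducting on $d$ using the face property) is essentially the paper's argument, minus the relative-interior device that lets the paper avoid the induction.
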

\begin{proof}
By Theorem~6.9 of \cite{Rockafellar1970}, the state $\rho$ belongs to the relative 
interior of the convex hull $C=\conv(\{\rho_1,\dots,\rho_d\})$ of the points 
$\rho_1,\dots,\rho_d$. Since $\rho$ lies in the face $\fS(\cH,\cK)$, Theorem~18.1 of 
\cite{Rockafellar1970} shows that the whole convex set $C$ is included in 
$\fS(\cH,\cK)$. This proves the first statement; the second statement follows 
as the image of $\ket{\varphi_i}\!\!\bra{\varphi_i}$ is spanned by $\varphi_i$.
\end{proof}
The spectral projections of a state are supported on the same subspace as the state.
\par
\begin{Rem}\label{rem:spectral-proj}
Every state $\rho\in\fS(\cH,\cK)$ is the convex combination of at most $\dim_\C(\cK)$
pure states supported on $\cK$. It suffices to write $\rho$ as a convex combination 
of its spectral projections. Lemma~\ref{lem:support} shows the spectral projections 
are supported on $\cK$.
\end{Rem}
%
%%%%%%%%%%%%%%%%%%%%%%%%%%%%%%%%%%%%%%%%%%%%%%%%%%%%%%%%%%%%%%%%%%%%%%%%%%%%
%%%%%%%%%%%%%%%%%%%%%%%%%%%%%%%%%%%%%%%%%%%%%%%%%%%%%%%%%%%%%%%%%%%%%%%%%%%%
%%%%%%%%%%%%%%%%%%%%%%%%%%%%%%%%%%%%%%%%%%%%%%%%%%%%%%%%%%%%%%%%%%%%%%%%%%%%
%%%%%%%%%%%%%%%%%%%%%%%%%%%%%%%%%%%%%%%%%%%%%%%%%%%%%%%%%%%%%%%%%%%%%%%%%%%%
%%%%%%%%%%%%%%%%%%%%%%%%%%%%%%%%%%%%%%%%%%%%%%%%%%%%%%%%%%%%%%%%%%%%%%%%%%%%
%
\section{Decomposition of symmetric separable states}
\label{sec:dec-symm-sep}
We show that every symmetric separable state is a convex combination of symmetric 
pure product states. The same proof shows all antisymmetric states are
entangled. 
\par
We label the units of a many-particle system by a finite set $\nu$. Let $\cH$ be a 
finite-dimensional Hilbert space, the one-particle Hilbert space. The Hilbert space of 
the subsystem $\mu$ is the tensor product $\cH^{\otimes\mu}=\bigotimes_{i\in \mu}\cH$, 
the identity operator of which we denote by $\id_\mu$, for all subsets 
$\mu\subset\nu$. We replace $\nu$ with its cardinality $|\nu|$ if convenient, for 
example to write symmetric product states 
$\sigma^{|\nu|}=\sigma^\nu=\bigotimes_{i\in\nu}\sigma$ where $\sigma\in\fS(\cH)$ is a 
state on the one-particle Hilbert space.
\par
Writing the complement as $\overline{\mu}=\nu\setminus\mu$, we define 
the {\em partial trace} over the subsystem $\overline{\mu}$ as the linear map 
$\Tr_{\overline{\mu}}:\cB(\cH^{\otimes\nu})\to\cB(\cH^{\otimes\mu})$, where
$\Tr_{\overline{\mu}}(A)$ is characterized by the conditions
\[\textstyle
\Tr\big(\Tr_{\overline{\mu}}(A)B\big)=\Tr\big(A(B\otimes\id_{\overline{\mu}})\big),
\qquad 
B\in\cB(\cH^{\otimes\mu}),
\] 
for all operators $A\in\cB(\cH^{\otimes\nu})$. If $\rho$ is a state on 
$\cH^{\otimes\nu}$, then $\Tr_{\overline{\mu}}(\rho)$ is a state on 
$\cH^{\otimes\mu}$, called the {\em marginal} of $\rho$ in the subsystem $\mu$. 
\par
A state $\rho\in\fS(\cH^{\otimes\nu})$ is a {\em product state} (with respect to 
$\cH$ and $\nu$) if it can be written in the form $\rho=\bigotimes_{i\in\nu}\rho_i$, 
where $\rho_i\in\fS(\cH)$ for all $i\in\nu$. Using pure states \eqref{eq:pure} we 
define the set of pure product states by
\begin{equation}\label{eq:pure-prod}
\fS_{\pure,\pro}(\cH,\nu)
=\Big\{\bigotimes_{i\in\nu}\rho_i \mid 
\rho_i\in\ext(\fS(\cH))\;\forall i\in\nu\Big\}. 
\end{equation}
A state $\rho\in\fS(\cH^{\otimes\nu})$ is {\em separable} if $\rho$ is a convex 
combination of product states. Otherwise, $\rho$ is {\em entangled}. We denote the 
set of separable states by 
\begin{equation}\label{eq:separable}
\fS_\sep(\cH,\nu).
\end{equation}
The possibility to decompose each factor of a product state into pure states and 
the distributive law show $\fS_\sep(\cH,\nu)$ is the convex hull of 
$\fS_{\pure,\pro}(\cH,\nu)$. In fact, $\fS_{\pure,\pro}(\cH,\nu)$ is the set of 
extreme points of $\fS_\sep(\cH,\nu)$ as every pure product state is an extreme 
point of $\fS(\cH^{\otimes\nu})$ and hence of $\fS_\sep(\cH,\nu)$. 
\par
The symmetric group $S_\nu$ of the finite set $\nu$ acts by linear automorphisms on 
the many-particle Hilbert space $\cH^{\otimes\nu}$ as per the law 
\[
\sigma\Big(\bigotimes_{i\in\nu}\varphi_i\Big)
=\bigotimes_{i\in\nu}\varphi_{\sigma^{-1}(i)},
\qquad\varphi_i\in\cH\;\forall i\in\nu,
\]
for all permutations $\sigma\in S_\nu$. The $\nu$-fold \emph{symmetric tensor product} 
of the Hilbert space $\cH$ is defined by
\begin{equation}\label{eq:symmetric}
\cH_\sym^{\otimes\nu}
=\Big\{\varphi\in\cH^{\otimes\nu} \mid 
\sigma(\varphi)=\varphi\;\forall\sigma\in S_\nu\Big\}.
\end{equation}
\par
It is easy to write down a basis for the symmetric tensor product.
\par
\begin{Rem}\label{rem:dicke-states}
Let $\ket{0},\ket{1},\dots,\ket{n-1}$ be an orthonormal basis of $\cH$. 
Let $T=(T_0,\dots,T_{n-1})$ be an $n$-tuple of non-negative integers that sum up 
to the cardinality $|\nu|$ of $\nu$ and define  
\[
\varphi_T
=c_T \sum_{\sigma\in S_\nu}\sigma\Big(\ket{
\underbrace{0\dots 0}_{\text{$T_0$}}
\underbrace{1\dots 1}_{\text{$T_1$}}
\dots
\underbrace{(n-1)\dots(n-1)}_{\text{$T_{n-1}$}}
}\Big),
\]
where $c_T=(|\nu|!\prod_{i=0}^{n-1}T_i!)^{-1/2}$ is a normalization constant. 
The orthonormal set $\{\varphi_T\}$ is a basis of the symmetric tensor product 
$\cH_\sym^{\otimes\nu}$, because $\{\varphi_T\}$ spans the image of the 
projector $\cH^{\otimes\nu}\to\cH^{\otimes\nu}$ defined by 
$\tfrac{1}{|\nu|!}\sum_{\sigma\in S_\nu}\sigma$, see \cite[Section~4.10]{Greub1978}. 
For example, the set of {\em Dicke states} \cite{MeillMeyer2017} with $k$ excitations, 
\[
\varphi_{(|\nu|-k,k)}
=(|\nu|!(|\nu|-k)!k!)^{-1/2}
\sum_{\sigma\in S_\nu}\sigma\Big(\ket{
\underbrace{0\dots 0}_{\text{$|\nu|-k$}}
\underbrace{1\dots 1}_{\text{$k$}}
}\Big),
\quad
k=0,\dots,|\nu|,
\]
is an orthonormal basis of the $\nu$-fold symmetric tensor product of $\C^2$. 
\end{Rem}
\begin{Def}[Symmetric States]
A state $\rho\in\fS(\cH^{\otimes\nu})$ is {\em symmetric} (with respect to $\cH$ and 
$\nu$) if $\rho$ is supported on the symmetric tensor product $\cH_\sym^{\otimes\nu}$ 
introduced in equation \eqref{eq:symmetric}. Using the notation from 
equation~\eqref{eq:supp} and~\eqref{eq:separable}, we denote the set of symmetric 
separable states by
\begin{equation}\label{eq:sym-sep}
\fS_{\sym,\sep}(\cH,\nu)
=\fS(\cH^{\otimes\nu},\cH_\sym^{\otimes\nu})\cap\fS_\sep(\cH,\nu).
\end{equation}
\end{Def}
Every convex combination of symmetric pure product states $\sigma^{\otimes\nu}$ is a 
symmetric separable state. Here we prove the converse statement that every symmetric 
separable state is a convex combination of symmetric pure product states.
\par
\begin{Thm}\label{thm:symmetric}
The set of symmetric separable states is the convex hull of the set of symmetric
pure product states,
\begin{equation}\label{eq:symm-conv}
\fS_{\sym,\sep}(\cH,\nu)
=\conv\big(\{\sigma^{\otimes\nu}\mid \sigma\in\ext(\fS(\cH))\}\big).
\end{equation}
The symmetric pure product states are the extreme points of $\fS_{\sym,\sep}(\cH,\nu)$,
\begin{equation}\label{eq:symm-ext}
\ext\big(\fS_{\sym,\sep}(\cH,\nu)\big)
=\big\{\sigma^{\otimes\nu}\mid \sigma\in\ext(\fS(\cH))\big\}.
\end{equation}
\end{Thm}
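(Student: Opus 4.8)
The plan is to prove the two set equalities in turn, with equation~\eqref{eq:symm-conv} carrying the main content and equation~\eqref{eq:symm-ext} following almost formally once \eqref{eq:symm-conv} is in hand.

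For \eqref{eq:symm-conv}, the inclusion $\supseteq$ is immediate: if $\sigma=\ket{\varphi}\!\!\bra{\varphi}$ is a pure state on $\cH$, then $\sigma^{\otimes\nu}=\ket{\varphi^{\otimes\nu}}\!\!\bra{\varphi^{\otimes\nu}}$ is supported on the vector $\varphi^{\otimes\nu}$, which is manifestly invariant under every permutation in $S_\nu$ and hence lies in $\cH_\sym^{\otimes\nu}$; as $\sigma^{\otimes\nu}$ is also a product state, it is a symmetric separable state, and since $\fS_{\sym,\sep}(\cH,\nu)$ is convex it contains the convex hull of all such states.

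For the reverse inclusion I would take $\rho\in\fS_{\sym,\sep}(\cH,\nu)$ and invoke separability to write $\rho=\sum_{i=1}^d\lambda_i\ket{\psi_i}\!\!\bra{\psi_i}$ as a convex combination with weights $\lambda_i>0$ and unit product vectors $\psi_i=\bigotimes_{k\in\nu}\varphi_{i,k}$. Because $\rho$ is supported on $\cH_\sym^{\otimes\nu}$, Lemma~\ref{lem:support} forces each $\psi_i$ to lie in $\cH_\sym^{\otimes\nu}$, so every product vector $\psi_i$ is permutation invariant. The crux of the argument, and the step I expect to be the main obstacle, is the elementary but essential linear-algebra fact that a nonzero permutation-invariant simple tensor $\bigotimes_{k\in\nu}\varphi_k$ has all of its factors mutually proportional. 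To establish it I would apply a transposition $\tau=(k\,\ell)$: invariance forces the nonzero simple tensors $\bigotimes_k\varphi_{\tau(k)}$ and $\bigotimes_k\varphi_k$ to coincide, and the fact that two equal nonzero simple tensors have factors agreeing up to scalars with product one, applied to slots $k$ and $\ell$, yields that $\varphi_k$ and $\varphi_\ell$ are linearly dependent. Ranging over all pairs produces a single unit vector $\varphi$ and a unit scalar $c$ with $\psi_i=c\,\varphi^{\otimes\nu}$, whence $\ket{\psi_i}\!\!\bra{\psi_i}=\sigma_i^{\otimes\nu}$ with $\sigma_i=\ket{\varphi}\!\!\bra{\varphi}$ pure. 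Substituting back exhibits $\rho$ as a convex combination of symmetric pure product states, completing \eqref{eq:symm-conv}.

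For \eqref{eq:symm-ext} I would argue as follows. Each symmetric pure product state $\sigma^{\otimes\nu}$ with $\sigma$ pure is a rank-one projector, hence a pure state and thus an extreme point of the full set $\fS(\cH^{\otimes\nu})$; an extreme point that happens to lie in the smaller set $\fS_{\sym,\sep}(\cH,\nu)$ is a fortiori extreme there, giving one inclusion. Conversely, any extreme point of $\fS_{\sym,\sep}(\cH,\nu)$ is, by \eqref{eq:symm-conv} together with Carath\'eodory's theorem, a finite convex combination of symmetric pure product states; extremality then forces it to coincide with one of these, giving the reverse inclusion and hence \eqref{eq:symm-ext}.
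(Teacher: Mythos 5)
Your proof is correct, and its skeleton coincides with the paper's: the nontrivial inclusion of \eqref{eq:symm-conv} is obtained in both cases by decomposing $\rho$ into pure product states, invoking Lemma~\ref{lem:support} to place each product vector inside $\cH_\sym^{\otimes\nu}$, and then upgrading each such vector to a tensor power; and \eqref{eq:symm-ext} is derived in both cases from the same two observations (pure product states are extreme in $\fS(\cH^{\otimes\nu})$, hence in the subset, and conversely an extreme point of the convex hull must come from the generating set). The single point of divergence is the step you flagged as the crux: showing that a nonzero permutation-invariant simple tensor $\bigotimes_{k\in\nu}\varphi_k$ equals a unimodular scalar times $\varphi^{\otimes\nu}$. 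You prove this by elementary tensor algebra, applying a transposition and using uniqueness (up to scalars of product one) of the factors of a nonzero simple tensor, then collecting the phases into a single constant $c$ with $|c|=1$. The paper instead stays at the level of states: since $\ket{\varphi_i}\!\!\bra{\varphi_i}$ is a symmetric state, all of its one-particle marginals coincide, and for a product vector the marginal in slot $j$ is exactly $\ket{\psi_{i,j}}\!\!\bra{\psi_{i,j}}$; hence all factors agree as pure states and $\ket{\varphi_i}\!\!\bra{\varphi_i}=(\ket{\psi_i}\!\!\bra{\psi_i})^{\otimes\nu}$ with no phase bookkeeping at all. The marginal argument is slicker and fits the paper's later use of partial traces; your argument is more self-contained, needing only standard multilinear algebra and no partial-trace machinery, and the only delicate point in it --- that the global phase drops out when passing to the projector --- is one you handle correctly.
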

\begin{proof}
In order to prove \eqref{eq:symm-conv} it suffices to show that every symmetric 
separable state $\rho$ is a convex combination of symmetric pure product states. 
Being a separable state, $\rho$ admits a convex decomposition 
$\rho=\sum_{i=1}^d \lambda_i\ket{\varphi_i}\!\!\bra{\varphi_i}$ with positive 
coefficients $\lambda_1,\dots,\lambda_d>0$ into pure product states 
\[\textstyle
\varphi_i=\bigotimes_{j\in\nu}\psi_{i,j}\in\cH^{\otimes\nu}, 
\]
where $\psi_{i,j}\in\cH$ is a pure state for all $i=1,\ldots,d$ and $j\in\nu$. Since 
$\rho$ is supported on the symmetric tensor product $\cH_\sym^{\otimes\nu}$, 
Lemma~\ref{lem:support} shows $\varphi_i$ lies in the symmetric tensor product 
$\cH_\sym^{\otimes\nu}$ for all $i=1,\ldots,d$. This implies the one-particle 
marginals of $\varphi_i$ are identical to a pure state $\psi_i\in\cH$,
\[
\ket{\psi_i}\!\!\bra{\psi_i}
=\ket{\psi_{i,j}}\!\!\bra{\psi_{i,j}}
=\Tr_{\overline{\{j\}}}(\ket{\varphi_i}\!\!\bra{\varphi_i}),
\qquad j\in\nu,
\]
for all $i=1,\ldots,d$, which shows 
\[\textstyle
\rho
=\sum_{i=1}^d \lambda_i\ket{\varphi_i}\!\!\bra{\varphi_i}
=\sum_{i=1}^d \lambda_i(\ket{\psi_i}\!\!\bra{\psi_i})^{\otimes\nu}
\]
and proves the claim.
\par
The inclusion ``$\subset$'' in equation~\eqref{eq:symm-conv} implies 
that every extreme point of $\fS_{\sym,\sep}(\cH,\nu)$ must be a symmetric pure 
product state. Conversely, every symmetric pure product state is an extreme point 
of $\fS(\cH^{\otimes\nu})$ and hence of $\fS_{\sym,\sep}(\cH,\nu)$, which yields 
equation \eqref{eq:symm-ext}. 
\end{proof}
\begin{Que}
How many symmetric pure product states are required in the decomposition of a 
symmetric separable state $\rho$? The rank of $\rho$ may not be an upper bound on 
$d$ (as in Remark~\ref{rem:spectral-proj}) because the spectral projections of 
$\rho$ may be entangled. Examples of separable two-qubit states of rank three that 
require four pure product states in the decomposition are described in 
\cite{Wu2004}. How is the situation in the symmetric setting?
\end{Que}
\begin{Rem}
A similar decomposition as in Theorem~\ref{thm:symmetric} exists for the class of 
completely symmetric states, which is important in entanglement theory 
\cite{Chen-etal2019,QianChu2019}. These states are convex combination of real 
symmetric pure product states.
\end{Rem}
The $\nu$-fold \emph{antisymmetric tensor product} of the Hilbert space $\cH$ is 
defined by
\[
\Big\{\varphi\in\cH^{\otimes\nu} \mid 
\sigma(\varphi)=\sgn(\sigma)\varphi\;\forall\sigma\in S_\nu\Big\},
\]
where $\sgn(\sigma)$ is the sign of a permutation $\sigma\in S_\nu$, which is $+1$ 
if $\sigma$ is a composition of an even number of transpositions, and $-1$ else. A 
state on $\cH^{\otimes\nu}$ is \emph{antisymmetric} if it is supported on the 
antisymmetric tensor product.
\par
\begin{Thm}\label{thm:antisymmetric}
If $\nu$ has at least two elements, then every antisymmetric state is entangled.
\end{Thm}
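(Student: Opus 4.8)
The plan is to mirror the proof of Theorem~\ref{thm:symmetric}, replacing the symmetric tensor product by the antisymmetric one and then deriving a contradiction from the support condition. Suppose, for contradiction, that some antisymmetric state $\rho$ were separable. Then $\rho$ would admit a convex decomposition into pure product states $\ket{\varphi_i}\!\!\bra{\varphi_i}$ with $\varphi_i=\bigotimes_{j\in\nu}\psi_{i,j}$. Since $\rho$ is supported on the antisymmetric tensor product, Lemma~\ref{lem:support} forces each product vector $\varphi_i$ to lie in the antisymmetric tensor product as well.

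The heart of the argument is then purely linear-algebraic: a nonzero product vector $\bigotimes_{j\in\nu}\psi_j$ can never be antisymmetric once $|\nu|\ge 2$. To see this, pick two distinct indices $a,b\in\nu$ and let $\tau$ be the transposition swapping them. Applying $\tau$ to the product vector exchanges the factors $\psi_a$ and $\psi_b$ while fixing the others, so the antisymmetry condition $\tau(\varphi_i)=-\varphi_i$ would require $\psi_b\otimes\psi_a\otimes(\text{rest})=-\psi_a\otimes\psi_b\otimes(\text{rest})$ in the two relevant tensor slots. For a nonzero product vector this forces $\psi_b\otimes\psi_a=-\psi_a\otimes\psi_b$, which is impossible: a simple (rank-one) tensor $\psi_a\otimes\psi_b$ is never antisymmetric under the flip unless it is zero, as one checks by pairing against $\psi_a^\ast\otimes\psi_a^\ast$ or by noting that $\psi_a\otimes\psi_b+\psi_b\otimes\psi_a$ and $\psi_a\otimes\psi_b-\psi_b\otimes\psi_a$ cannot both vanish for nonzero $\psi_a,\psi_b$. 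Hence no nonzero antisymmetric product vector exists.

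This contradicts the conclusion that each $\varphi_i$ is a nonzero antisymmetric product vector, so no separable antisymmetric state can exist; equivalently, every antisymmetric state is entangled. I expect the only subtle point to be the clean verification that a simple tensor cannot be antisymmetric, which is where one must be careful not to conflate the action of $\tau$ on the full $|\nu|$-fold product with its action on the two swapped slots. Everything else follows formally from Lemma~\ref{lem:support} exactly as in the symmetric case, with the sign reversal turning the decomposition statement into a nonexistence statement.
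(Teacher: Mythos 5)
Your strategy is sound and runs parallel to the paper's own proof: both decompose a hypothetical separable antisymmetric state into pure product states, invoke Lemma~\ref{lem:support} to force each product vector $\varphi_i$ into the antisymmetric tensor product, and then derive a contradiction. Where the paper reuses the marginal argument of Theorem~\ref{thm:symmetric} --- which applies verbatim because $\ket{\varphi_i}\!\!\bra{\varphi_i}$ is permutation-invariant whether the sign is $+1$ or $-1$ --- to conclude that each $\varphi_i$ would have to be a symmetric product vector $\psi_i^{\otimes\nu}$, impossible inside the antisymmetric subspace, you instead argue directly that no nonzero product vector can be antisymmetric. That is a legitimate, if anything more elementary, route, and your reduction to the two swapped slots is handled correctly: from $\tau(\varphi_i)=-\varphi_i$ one gets $(\psi_a\otimes\psi_b+\psi_b\otimes\psi_a)\otimes r=0$ with $r\neq 0$, hence $\psi_a\otimes\psi_b+\psi_b\otimes\psi_a=0$.

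However, both of the checks you offer for the crux fact are defective. Pairing $\psi_a\otimes\psi_b+\psi_b\otimes\psi_a=0$ against $\psi_a\otimes\psi_a$ yields only $\braket{\psi_a|\psi_b}=0$, i.e.\ orthogonality of $\psi_a$ and $\psi_b$ --- not yet a contradiction; you would still have to add that orthogonal (hence linearly independent) vectors give linearly independent tensors $\psi_a\otimes\psi_b$ and $\psi_b\otimes\psi_a$, whose sum therefore cannot vanish. Worse, the observation that $\psi_a\otimes\psi_b+\psi_b\otimes\psi_a$ and $\psi_a\otimes\psi_b-\psi_b\otimes\psi_a$ ``cannot both vanish'' only restates that $\psi_a\otimes\psi_b\neq 0$; antisymmetry hands you the vanishing of the \emph{sum} alone, and nothing forces the difference to vanish, so this proves a strictly weaker statement than the one you need. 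The repair is a one-liner: take the inner product of $\psi_a\otimes\psi_b+\psi_b\otimes\psi_a=0$ with $\psi_a\otimes\psi_b$ to get $\|\psi_a\|^2\|\psi_b\|^2+|\braket{\psi_a|\psi_b}|^2=0$, which is impossible for nonzero vectors. With that substitution your proof is complete.
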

\begin{proof}
The same reasoning as in Theorem~\ref{thm:symmetric} shows every antisymmetric 
separable state is a convex combination of symmetric pure product states. Hence,
the existence of an antisymmetric separable state would lead to a contradiction.
\end{proof}
%
%%%%%%%%%%%%%%%%%%%%%%%%%%%%%%%%%%%%%%%%%%%%%%%%%%%%%%%%%%%%%%%%%%%%%%%%%%%%
%%%%%%%%%%%%%%%%%%%%%%%%%%%%%%%%%%%%%%%%%%%%%%%%%%%%%%%%%%%%%%%%%%%%%%%%%%%%
%%%%%%%%%%%%%%%%%%%%%%%%%%%%%%%%%%%%%%%%%%%%%%%%%%%%%%%%%%%%%%%%%%%%%%%%%%%%
%%%%%%%%%%%%%%%%%%%%%%%%%%%%%%%%%%%%%%%%%%%%%%%%%%%%%%%%%%%%%%%%%%%%%%%%%%%%
%%%%%%%%%%%%%%%%%%%%%%%%%%%%%%%%%%%%%%%%%%%%%%%%%%%%%%%%%%%%%%%%%%%%%%%%%%%%
%
\section{Joint Numerical Ranges}
\label{sec:jnr}
Numerical ranges, originally a topic of matrix theory, have become indispensable 
tools in quantum information theory
\cite{Schulte-Herbrueggen-etal2008,Gawron-etal2010,Chen-etal2017b,Li-etal2020}
to study linear images of certain subsets of quantum states.
We show two notions of symmetric numerical ranges to be equal.
\par
Let $A_1,\dots,A_m\in \cB(\cH^{\otimes\nu})$ be a sequence of hermitian operators.
Using the sets of pure states~\eqref{eq:pure}, pure product 
states~\eqref{eq:pure-prod}, and separable states~\eqref{eq:separable}, 
we define the numerical ranges \cite{Gawron-etal2010,Chen-etal2017b} 
\begin{align*}
\Lambda
&=\left\{\Tr(\rho A_i)_{i=1}^m \mid \rho\in\ext(\fS(\cH^{\otimes\nu}))\right\},
&& \text{\emph{(joint numerical range)}}\\
\Pi
&=\left\{\Tr(\rho A_i)_{i=1}^m \mid \rho\in\fS_{\pure,\pro}(\cH,\nu)\right\},
&& \text{\emph{(\parbox{\widthof{joint}}{\centering \ditto} product numerical range)}}\\
\Theta
&=\left\{\Tr(\rho A_i)_{i=1}^m \mid \rho\in\fS_\sep(\cH,\nu)\right\}.
&& \text{\emph{(\parbox{\widthof{joint}}{\centering \ditto} separable 
\parbox{\widthof{blabla}}{\centering \ditto}\hspace{1em}
\parbox{\widthof{range}}{\centering \ditto})}}
\end{align*}
Symmetric numerical ranges have been studied, too \cite{Chen-etal2017b}. Using 
symmetric pure product states, we define the set
\begin{equation}\label{eq:jssnr}
\Pi_\sym
=\left\{\Tr(\sigma^{\otimes\nu}A_i)_{i=1}^m \mid \sigma\in\ext(\fS(\cH))\right\},
\end{equation}
which we call \emph{joint symmetric product numerical range}. Employing the set of 
symmetric separable states \eqref{eq:sym-sep}, we define the set
\[
\Theta_\sym
=\left\{\Tr(\rho A_i)_{i=1}^m \mid \rho\in \fS_{\sym,\sep}(\cH,\nu) \right\},
\]
which we call \emph{joint symmetric separable numerical range}. 
\par
Recalling equation \eqref{eq:pure-prod}, we observe the set of separable states is 
the convex hull of the set of pure product states; therefore $\Theta=\conv(\Pi)$. 
The analogue is true for symmetric numerical ranges.
\par
\begin{Cor}\label{cor:jssnr}
The joint symmetric separable numerical range $\Theta_\sym$ is the convex hull of the 
joint symmetric product numerical range $\Pi_\sym$.
\end{Cor}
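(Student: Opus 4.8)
The plan is to realize both sets as images of states under one and the same linear map, and then to feed in the structural description of the domain provided by Theorem~\ref{thm:symmetric}. First I would introduce the map $\Phi:\cB(\cH^{\otimes\nu})\to\R^m$ given by $\Phi(A)=(\Tr(A A_i))_{i=1}^m$. Since each $A_i$ is hermitian, $\Phi$ sends states (indeed all hermitian operators) to real tuples, and $\Phi$ is manifestly linear in $A$. With this notation the two objects of interest become $\Pi_\sym=\Phi\big(\{\sigma^{\otimes\nu}\mid\sigma\in\ext(\fS(\cH))\}\big)$ and $\Theta_\sym=\Phi\big(\fS_{\sym,\sep}(\cH,\nu)\big)$, directly from their defining equations.

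The decisive step is to substitute the description of the domain supplied by equation~\eqref{eq:symm-conv} of Theorem~\ref{thm:symmetric}, namely $\fS_{\sym,\sep}(\cH,\nu)=\conv\big(\{\sigma^{\otimes\nu}\mid\sigma\in\ext(\fS(\cH))\}\big)$. I would then invoke the elementary fact that a linear (hence affine) map commutes with the formation of convex hulls: for any subset $S$ of the domain one has $\Phi(\conv(S))=\conv(\Phi(S))$. Writing $S=\{\sigma^{\otimes\nu}\mid\sigma\in\ext(\fS(\cH))\}$ and chaining these identities yields
\[
\Theta_\sym=\Phi(\conv(S))=\conv(\Phi(S))=\conv(\Pi_\sym),
\]
which is exactly the assertion of the corollary.

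I do not expect a genuine obstacle, since the statement is a formal consequence of Theorem~\ref{thm:symmetric} together with the linearity of $\rho\mapsto(\Tr(\rho A_i))_{i=1}^m$. The only point deserving a line of justification is the commutation $\Phi(\conv(S))=\conv(\Phi(S))$: the inclusion ``$\subseteq$'' holds because $\Phi$ carries a convex combination of points of $S$ to the identical convex combination of their images, and the reverse inclusion ``$\supseteq$'' follows by the same observation read backwards. Everything else is bookkeeping with the definitions of $\Pi_\sym$ and $\Theta_\sym$, so the argument should be very short.
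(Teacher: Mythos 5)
Your proposal is correct and follows essentially the same route as the paper's own proof: both realize $\Theta_\sym$ and $\Pi_\sym$ as images of $\fS_{\sym,\sep}(\cH,\nu)$ and of the set of symmetric pure product states under the same linear map $\rho\mapsto(\Tr(\rho A_i))_{i=1}^m$, then apply equation~\eqref{eq:symm-conv} of Theorem~\ref{thm:symmetric} together with the fact that linear maps commute with convex hulls. The only difference is that you spell out this last commutation explicitly, which the paper leaves implicit.
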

\begin{proof}
Theorem~\ref{thm:symmetric} proves the set of symmetric separable states 
$\fS_{\sym,\sep}(\cH,\nu)$ is the convex hull of the set 
$\{\sigma^{\otimes\nu}\mid \sigma\in\ext(\fS(\cH))\}$ of symmetric pure product states. 
The claim follows as $\Theta_\sym$ is the image of the former set and $\Pi_\sym$ is the 
image of the latter set under the same linear map.
\end{proof}
Corollary~\ref{cor:jssnr} clarifies a relationship between two numerical ranges.
\par
\begin{Rem}
Chen et.~al \cite{Chen-etal2017b} have defined the symmetric version of the joint 
separable numerical range $\Theta$ to be the convex hull $\conv(\Pi_\sym)$ of the 
joint symmetric product numerical range $\Pi_\sym$. This brings to mind the 
equation $\Theta=\conv(\Pi)$. The analogous equation $\Theta_\sym=\conv(\Pi_\sym)$ 
shows the set $\conv(\Pi_\sym)$ is a linear image of the set of symmetric 
separable states $\fS_{\sym,\sep}(\cH,\nu)$, thereby corroborating $\conv(\Pi_\sym)$ 
is a natural choice for a symmetric version of $\Theta$.
\end{Rem}
Corollary~\ref{cor:jssnr} offers a new approach to study the set 
$\conv(\Pi_\sym)$. 
\par
\begin{Rem}
The identity $\conv(\Pi_\sym)=\Theta_\sym$ allows us to study the set
$\conv(\Pi_\sym)$ \emph{via} the set of symmetric separable states 
$\fS_{\sym,\sep}(\cH,\nu)$, the preimage of the numerical range 
$\Theta_\sym$. The convex set $\fS_{\sym,\sep}(\cH,\nu)$ is a face of the set of 
separable states $\fS_\sep(\cH,\nu)$, for which algebraic tools are available
\cite{AlfsenShultz2010,Kye2013,BengtssonZyczkowski2017,AubrunSzarek2017}. 
Studying linear images of $\fS_{\sym,\sep}(\cH,\nu)$ in the context of statistical
mechanics is interesting already for two qubits \cite{Chen-etal2017a,Chen-etal2017b}, 
where every separable state has a positive partial transpose \cite{Horodecki-etal1996}. 
\end{Rem}
%
%%%%%%%%%%%%%%%%%%%%%%%%%%%%%%%%%%%%%%%%%%%%%%%%%%%%%%%%%%%%%%%%%%%%%%%%%%%%
%%%%%%%%%%%%%%%%%%%%%%%%%%%%%%%%%%%%%%%%%%%%%%%%%%%%%%%%%%%%%%%%%%%%%%%%%%%%
%%%%%%%%%%%%%%%%%%%%%%%%%%%%%%%%%%%%%%%%%%%%%%%%%%%%%%%%%%%%%%%%%%%%%%%%%%%%
%%%%%%%%%%%%%%%%%%%%%%%%%%%%%%%%%%%%%%%%%%%%%%%%%%%%%%%%%%%%%%%%%%%%%%%%%%%%
%%%%%%%%%%%%%%%%%%%%%%%%%%%%%%%%%%%%%%%%%%%%%%%%%%%%%%%%%%%%%%%%%%%%%%%%%%%%
%
\section{The Quantum de Finetti Theorem}
\label{sec:bosons}
This section highlights the role the joint symmetric separable numerical range 
$\Theta_\sym$ plays in the statistical mechanics of bosonic systems.
\par
Symmetric states on the many-particle Hilbert space $\cH^{\otimes\nu}$ are 
called {\em bosonic states} in physics 
\cite{Messiah1999,Lewin-etal2014,Chen-etal2017b}. Due to the symmetry, the marginals 
of a symmetric state $\rho\in\fS(\cH^{\otimes\nu},\cH_\sym^{\otimes\nu})$ are 
themselves symmetric states. The marginals of $\rho$ depend on the subsystems only 
through their sizes,
\[
\Tr_{\overline{\mu}}(\rho)
=\Tr_{\overline{\eta}}(\rho)
\qquad
\text{for all $\mu,\eta\subset\nu, |\mu|=|\eta|$.}
\]
Thus, one studies the set of {\em $k$-particle $N$-representable density operators}
\[
\cP_N^{(k)}
=\left\{\Tr_{\overline{\{1,\ldots,k\}}}(\rho) \mid 
\rho\in\fS(\cH^{\otimes N},\cH_\sym^{\otimes N})\right\}
\subset\fS(\cH^{\otimes k},\cH_\sym^{\otimes k})
\]
with reference to the total number $N=|\nu|$ of units and the size $k$ of the 
subsystem. According to a classic result by St{\o}rmer \cite{Stormer1969}, 
also known as the {\em quantum de Finetti theorem} 
\cite{Lewin-etal2014,Chen-etal2017b,Zeng-etal2019}, we have 
\begin{equation*}
\bigcap_{N\geq k}\cP_N^{(k)}
=\conv\big(\{\sigma^{\otimes k}\mid \sigma\in\ext(\fS(\cH))\}\big),
\qquad 
k\in\N.
\end{equation*}
In other words, the set of all possible $k$-particle marginals of an infinite 
bosonic system is the convex hull of the set of symmetric pure product states. 
By Theorem~\ref{thm:symmetric}, 
\begin{equation}\label{eq:stormer}
\bigcap_{N\geq k}\cP_N^{(k)}
=\fS_{\sym,\sep}(\cH,k),
\qquad 
k\in\N,
\end{equation}
where $\fS_{\sym,\sep}(\cH,k)=\fS_{\sym,\sep}(\cH,\nu)$ is the set of symmetric 
separable states on a system $\nu$ of size $|\nu|=k$.
\par
Quantum phase transitions are associated with abrupt changes of the ground state 
energy of an energy operator
\[ 
H(x)=x_1H_1+x_2H_2+\dots+x_mH_m
\] 
on $\cH^{\otimes N}$ under smooth changes of the parameter $x=(x_1,\dots,x_k)\in\R^m$ 
for large system sizes $N\to\infty$. In many cases the energy operators $H_1,\dots,H_m$ 
are {\em $k$-local Hamiltonians} for some fixed $k\in\N$. This means 
\[\textstyle
H_i=\sum_{\mu\subset\nu,|\mu|=k}H_{i,\mu}\otimes\id_{\overline{\mu}},
\qquad i=1,\dots,m,
\]
where $H_{i,\mu}\in\cB(\cH^{\otimes\mu})$ interacts with at most $|\mu|=k$ units
of the total system $\nu=\{1,\ldots,N\}$. The ground state energy of a bosonic system 
with energy operator $H(x)$ is  
\[\textstyle
\min_{\rho\in\fS(\cH^{\otimes N},\cH_\sym^{\otimes N})}
\Tr(H(x)\rho)
=\min_{\sigma\in\cP_N^{(k)}}
\Tr(A(x)\sigma),
\]
where $A(x)=\sum_{i=1}^m x_iA_i$ and 
\[\textstyle
A_i=\sum_{\mu\subset\nu,|\mu|=k}H_{i,\mu}\in\cB(\cH^{\otimes k}),
\qquad i=1,\dots,m.
\]
For this sum to be well-defined, one may assume each operator $H_{i,\mu}$ is supported 
on the symmetric tensor product $\cH_\sym^{\otimes\mu}$ or replace $H_{i,\mu}$ with 
its compression onto $\cH_\sym^{\otimes\mu}$. Provided the convergence of the energy 
operators $A_i$ is guaranteed in the thermodynamic limit $N\to\infty$, St{\o}rmer's 
result in the version of equation \eqref{eq:stormer} shows the ground state energy of 
$H(x)$ is
\[
\min_{\sigma\in\fS_{\sym,\sep}(\cH,k)}\Tr(A(x)\sigma)
=\min_{y\in\Theta_\sym}x_1y_1+x_2y_2+\dots+x_my_m,
\]
where $\Theta_\sym$ is the joint symmetric separable numerical range \eqref{eq:jssnr}
with respect to the operators $A_1,\dots,A_m\in\cB(\cH^{\otimes k})$. 
\par
In other words, the ground state energy of $H(x)$ is the distance of the origin from 
the supporting hyperplane to $\Theta_\sym$ with inner normal vector $x$. This rationale 
strongly motivates to study the geometry of the numerical range $\Theta_\sym$. The 
key-feature of a ruled surface on the boundary of $\Theta_\sym$ is an expression of a 
phase transition \cite{Zauner-etal2016,Chen-etal2017a,Chen-etal2017b} and deserves a 
thorough investigation.
\par
%
%%%%%%%%%%%%%%%%%%%%%%%%%%%%%%%%%%%%%%%%%%%%%%%%%%%%%%%%%%%%%%%%%%%%%%%%%%%%
%%%%%%%%%%%%%%%%%%%%%%%%%%%%%%%%%%%%%%%%%%%%%%%%%%%%%%%%%%%%%%%%%%%%%%%%%%%%
%%%%%%%%%%%%%%%%%%%%%%%%%%%%%%%%%%%%%%%%%%%%%%%%%%%%%%%%%%%%%%%%%%%%%%%%%%%%
%%%%%%%%%%%%%%%%%%%%%%%%%%%%%%%%%%%%%%%%%%%%%%%%%%%%%%%%%%%%%%%%%%%%%%%%%%%%
%%%%%%%%%%%%%%%%%%%%%%%%%%%%%%%%%%%%%%%%%%%%%%%%%%%%%%%%%%%%%%%%%%%%%%%%%%%%
%
%
\vspace{\baselineskip}
\noindent
{\footnotesize
Acknowledgements.
I thank Karol {\.Z}yczkowski for discussions on numerical ranges and their 
use in quantum mechanics. I thank Lin Chen, who kindly pointed out to me
the difference between symmetric and completely symmetric states. 
I thank an anonymous referee, who made me aware of the literature  
\cite{Korbicz-etal2005,Kraus2003,TothGuehne2009,Bohnet-Waldraff-etal2016}.}
%
%%%%%%%%%%%%%%%%%%%%%%%%%%%%%%%%%%%%%%%%%%%%%%%%%%%%%%%%%%%%%%%%%%%%%%%%%%%%
%%%%%%%%%%%%%%%%%%%%%%%%%%%%%%%%%%%%%%%%%%%%%%%%%%%%%%%%%%%%%%%%%%%%%%%%%%%%
%%%%%%%%%%%%%%%%%%%%%%%%%%%%%%%%%%%%%%%%%%%%%%%%%%%%%%%%%%%%%%%%%%%%%%%%%%%%
%%%%%%%%%%%%%%%%%%%%%%%%%%%%%%%%%%%%%%%%%%%%%%%%%%%%%%%%%%%%%%%%%%%%%%%%%%%%
%%%%%%%%%%%%%%%%%%%%%%%%%%%%%%%%%%%%%%%%%%%%%%%%%%%%%%%%%%%%%%%%%%%%%%%%%%%%
%
\bibliographystyle{plain}

%
%
%%%%%%%%%%%%%%%%%%%%%%%%%%%%%%%%%%%%%%%%%%%%%%%%%%%%%%%%%%%%%%%%%%%%%%%%%%%%
%%%%%%%%%%%%%%%%%%%%%%%%%%%%%%%%%%%%%%%%%%%%%%%%%%%%%%%%%%%%%%%%%%%%%%%%%%%%
%%%%%%%%%%%%%%%%%%%%%%%%%%%%%%%%%%%%%%%%%%%%%%%%%%%%%%%%%%%%%%%%%%%%%%%%%%%%
%%%%%%%%%%%%%%%%%%%%%%%%%%%%%%%%%%%%%%%%%%%%%%%%%%%%%%%%%%%%%%%%%%%%%%%%%%%%
%%%%%%%%%%%%%%%%%%%%%%%%%%%%%%%%%%%%%%%%%%%%%%%%%%%%%%%%%%%%%%%%%%%%%%%%%%%%
%
\vspace{\baselineskip}
\parbox{10cm}{%
Stephan Weis\\
Berlin, Germany\\
e-mail \texttt{maths@weis-stephan.de}}
\vspace{\baselineskip}
\par\noindent
\end{document}